\pgfplotsset{width=12.5cm, height=8cm}
\newlength{\codeLineLength}
\newcommand{\transform}{\text{\em transform\/}\xspace}
\newcommand{\cskw}{\text{\bf \{\/}\xspace}
\newcommand{\cekw}{\text{\bf \}\/}\xspace}
\newcommand{\qkw}{\text{\bf ?\/}\xspace}
\newcommand{\ckw}{\text{\bf :\/}\xspace}
\newcommand{\ifkw}{\text{\bf if\/}\xspace}
\newcommand{\forkw}{\text{\bf for\/}\xspace}
\newcommand{\assertkw}{\text{\bf assert\/}\xspace}
\newcommand{\elsekw}{\text{\bf else\/}\xspace}
\newcommand{\arrvar}{\text{$a$}\xspace}
\newcommand{\absind}{\text{$i_a$}\xspace}
\newcommand{\absarr}{\text{$x_a$}\xspace}
\newcommand{\ms}{\text{$\sigma$}\xspace}
\newcommand{\msl}{\text{$\sigma_l$}\xspace}
\newcommand{\msm}{\text{$\sigma_m$}\xspace}
\newcommand{\expr}{\text{$e$}\xspace}
\newcommand{\sexpr}{\text{$\mathbbm{E}$}\xspace}
\newcommand{\var}{\text{$x$}\xspace}
\newcommand{\size}{\text{\sf\em lastof}\xspace}
\newcommand{\loc}{\text{$\ell$}\xspace}
\newcommand{\svar}{\text{$\mathbb{V}$}\xspace}
\newcommand{\earr}{\text{$\mathbb{E}_A$}\xspace}
\newcommand{\eval}[2]{\text{$\left\llbracket #1 \right\rrbracket_{#2}$}\xspace}
\newcommand{\sval}{\text{$\mathbb{C}$}\xspace}
\newcommand{\vexpr}{\text{\eval{e}{\ms}}\xspace}
\newcommand{\pc}{\text{$P$}\xspace}
\newcommand{\pa}{\text{$P'$}\xspace}
\newcommand{\msa}{\text{$\sigma'$}\xspace}
\newcommand{\msalp}{\text{$\sigma'\!\!_{l'}$}\xspace}
\newcommand{\msamp}{\text{$\sigma'\!\!_{m'}$}\xspace}
\renewcommand{\int}{\text{\sf I}\xspace}
\newcommand{\smem}{\text{$\mathbb{M}$}\xspace}
\newcommand{\smema}{\text{$\mathbb{M'}$}\xspace}
\newcommand{\assert}{\text{$A_n$}\xspace}
\newcommand{\assertinv}{\text{$A\!_n^{inv}$}\xspace}
\newcommand{\emit}{\text{\sf\em emit}\xspace}
\newcommand{\fullarrayaccess}{\text{\sf\em fullarrayaccess}\xspace}
\newcommand{\loopbound}{\text{\sf\em loopbound}\xspace}
\newcommand{\loopdefs}{\text{\sf\em loopdefs}\xspace}
\newcommand{\T}{\text{\sf\em transform}\xspace}
\newcommand{\Prog}{\text{\sf\em P}\xspace}
\newcommand{\Stmt}{\text{\sf\em S}\xspace}
\newcommand{\Expr}{\text{\sf\em E}\xspace}
\newcommand{\Lvalue}{\text{\sf\em L}\xspace}
\newcommand{\Init}{\text{\sf\em I}\xspace}
\begin{document}

\pagestyle{plain}
\setcounter{page}{1}
\pagenumbering{arabic}


\title{Scaling Bounded Model Checking By Transforming Programs With Arrays}

\author{Anushri Jana \inst{1} \and Uday P. Khedker\inst{2} \and Advaita Datar \inst{1} \and R. Venkatesh\inst{1} \and Niyas C \inst{1}
}
\institute{ Tata Research Development and Design Centre, Pune, India\\
\email{\{anushri.jana,advaita.datar,r.venky,niyas.c\}@tcs.com}
\and Indian Institute of Technology Bombay, India. \\
\email{uday@cse.iitb.ac.in}
}

\maketitle
\begin{abstract}
Bounded Model Checking is one the most successful techniques for finding bugs in program. 
However, for programs with loops iterating over large-sized
arrays, bounded model checkers often exceed the limit of resources available to
them.  We present a transformation that enables bounded model checkers to
verify a certain class of array properties. Our technique transforms an
array-manipulating program in \textsc{Ansi}-C to an array-free and loop-free
program. The transformed program can efficiently be verified by an
off-the-shelf bounded model checker. Though the transformed program is, in
general, an abstraction of the original program, we formally characterize the
properties for which the transformation is precise. We demonstrate the
applicability and usefulness of our technique on both industry code as well as
academic benchmarks.
\end{abstract}

\begin{keywords} Program Transformation, Bounded Model Checking, Array, Verification. 
\end{keywords}

\section{Introduction} 
\label{introduction} 

Bounded Model Checking is one of the most successful techniques for
finding bugs~\cite{copty2001benefits} as evidenced by success 
achieved by tools implementing this technique in verification
competitions~\cite{SVCOMP16,SVCOMP15}. Given a program P
and a property $\varphi$, Bounded Model Checkers (BMCs) unroll the loops in P a fixed
number of times and search for violations to $\varphi$ in the unrolled program.
However, for programs with loops of large or unknown bounds, bounded model
checking instances often exceed the limits of resources available. In our
experience, programs manipulating large-sized arrays invariably have such loops
iterating over indices of the array. Consequently, BMCs
routinely face the issue of scalability in proving properties on arrays. The
situation is not different even when the property is an {\it array invariant}
i.e., it holds for every element of the array, a characteristic which can
potentially be exploited for efficient bounded model checking.

Consider the example in Figure~\ref{Mex1} manipulating an array of structures
$a$. The structure has two fields, $p$ and $q$, whose values are assigned in
the first {\it for} loop (lines 8--13) such that $a[i].q$ is the square of
$a[i].p$ for every index $i$. The second {\it for} loop (lines 14--17) asserts
that this property indeed holds for each element in $a$. This is a {\it
safe} program i.e., none of the assertions admit a counterexample.
CBMC~\cite{cbmc}, a bounded model checker for C, in an attempt to unwind first
loop 100000 times, runs out of memory before it even reaches
the loop with assertion. In fact, we tried this example with several other
model checkers\footnote{Result for \emph{motivatingExample.c} at \htmladdnormallink{https://sites.google.com/site/datastructureabstraction/}{https://sites.google.com/site/datastructureabstraction/}} 
	and none of them
were able to prove this property because of large loop bounds.

\begin{figure*}[t]
\scriptsize
\begin{minipage}{0.45\textwidth}
\begin{lstlisting}[frame=single,language=c,escapeinside={(*@}{@*)},basicstyle=\ttfamily]
1. struct S {
2.    unsigned int p;		  
3.    unsigned int q;		 		  
4. } a[100000];
5. int i,k;

06. main()
07. {
08.  for(i=0; i<100000; i++)
09.  {
10.   k = i; 
11.   a[i].p = k;
12.   a[i].q = k * k ;
13.  }

14.  for (i=0; i<100000; i++)
15.  {
16.   (*@ \color[rgb]{0.75,0.164,0.164}{assert}@*)(a[i].q == 
          a[i].p * a[i].p);
17.  }
18. }
\end{lstlisting}
\caption{Motivating Example}
\label{Mex1}
\end{minipage}
\quad
\quad
\begin{minipage}{0.50\textwidth}
\begin{lstlisting}[frame=single,language=c,escapeinside={(*@}{@*)},basicstyle=\ttfamily]
1. struct S{
2.    unsigned int p;		  
3.    unsigned int q;		 		  
4. }x_a;
5. int i_a;
6. int i,k;
7. main()
8. {
9.   i_a = nd(0,99999);

//first loop body
10.  k = nd(0,100000);
11.  i = i_a;
12.  k = i;
13.  (i == i_a)? x_a.p = k : k;
14.  (i == i_a)? x_a.q = k * k : k*k ;
15.  k = nd(0,100000);
  
//second loop body
16.  i = i_a;
17.  (*@ \color[rgb]{0.75,0.164,0.164}{assert}@*)(((i==i_a)?x_a.q:nd())
        ==((i==i_a)?x_a.p:nd())
        *((i==i_a)?x_a.p:nd()));
18. }

\end{lstlisting} 
\caption{Transformed Code}
\label{TEx1}
\end{minipage}
\end{figure*}

One of the ways of proving this example safe is to show that the property holds
for any arbitrary element of the array, say at index $i_c$. This allows us to
get rid of those parts of the program that do not update $a[i_c]$ which, in
turn, eliminates the loop iterating over all the array indices. This enables
CBMC to verify the assertion without getting stuck in the loop unrollings.
Moreover, since $i_c$ is chosen nondeterministically from the indices of $a$,
the property holds for every array element without loss of generality.

This paper presents the transformation sketched above with the aim that the
transformed program is easier for a BMC to verify as compared
to the original program. The transformation
is over-approximative i.e.,  it give more values than that by the original program.
This ensures that if the original program is safe with respect to the chosen
property, so is the transformed program. However, the over-approximation
raises two important questions spanning practical and intellectual considerations:
\begin{enumerate}
\item {\it Is the proposed approach practically useful? Does the
transformation enable a BMC to verify real-world programs,
and even academic benchmarks, fairly often?}

We provide an answer to this through an extensive experimental evaluation over
industry code as well as 
examples in the array category of SV-COMP
2016 benchmarks. In all the cases, we show that our approach helps CBMC to scale.
We further demonstrate the applicability of
our technique to successfully identify a large number of false warnings (on an average 73\%)
reported by a static analyzer on arrays in large programs.

\item {\it Is it possible to characterize a class of properties for which it is
precise?}

In order to address this we provide a formal characterization of properties for
which the transformation is precise i.e., we state criteria under which
the transformed program is unsafe only when the original program is unsafe
(Section~\ref{sec::prec}).
\end{enumerate}

To summarize, this paper makes the following contributions:
\begin{itemize}
\item A new technique using the concept of \emph{witness index} that enables BMCs to verify array invariant
properties in programs with loops iterating over large-sized arrays. 
\item A formal characterization of properties for which the technique is
precise.
\item A transformation engine implementing the technique.
\item An extensive experimental evaluation showing the applicability of our
technique to real-world code as well as to academic benchmarks.
\end{itemize}
		
The rest of the paper starts with an informal description of the transformation
(Section~\ref{desc}) before we define the semantics
(Section~\ref{sec:dsa}) and formally state the transformations rules
(Section~\ref{transformation}).  Section~\ref{sec:proof} and ~\ref{sec::prec},
resp., describe the soundness and precision of our approach.  Section~\ref{exp}
presents the experimental setup and results. We discuss the related work in
Section~\ref{relwork} before concluding in Section~\ref{conclusion}.

\section{Informal Description}
\label{desc}

Given a program \pc containing loops iterating over an array \arrvar, we transform it 
to a program \pa that has a pair \text{$\langle \absarr, \absind \rangle$} of a
\emph{witness variable} and a \emph{witness index} for the array and the index
such that
$\absarr$ represents the element $a[i_a]$ of the original program. Further, loops are replaced by 
their customized bodies that operate only on $\absarr$ instead of all elements of $a$.

To understand the intuition behind our transformation, consider a trace $t$ of \pc ending on the assertion
\assert. Consider the last occurrence of a statement \text{$s: a[e_1]=e_2$} in $t$. 
We wish to transform \pc such that
there exists a trace $t'$ of \pa ending on \assert with value of \absind equal to that of $e_1$
and value of \absarr equal to that of $e_2$.  
%
%
We achieve this by transforming the program such that: 
\begin{itemize}
	\item \absind gets a non-deterministic value at the start of the program (this facilitates arbitrary choice of array element $a[\absind]$).
    	\item array writes and reads for $a[\absind]$ gets replaced with witness variable \absarr.   
	\item array writes other than $a[\absind]$ gets eliminated and reads gets replaced with non-deterministic value.
	\item loop body is executed only once either non-deterministically or unconditionally based on loop characteristics. During the
              execution of the loop body, 
		\begin{itemize}
			\item the loop iterator variable gets the value of \absind or a non-deterministic value (depending on loop characteristics), and
		\item all other scalar variables whose values may be different in different iterations 
              		gets non-deterministic values.
	\end{itemize}
\end{itemize}

Figure~\ref{TEx1} shows the transformed program \pa for the program \pc of Figure~\ref{Mex1}. 
Function \texttt{nd(l,u)} returns a non-deterministic value in the range $[l..u]$.
In \pa, the witness index $i\_a$ for array $a$ is globally assigned a 
non-deterministic value within the range of array size (at line 9).
In a run of BMC, the assertion is checked for this non-deterministically chosen element $a[\absind]$.
To ensure that values for the same index $a[i_a]$ are written and read, we replace 
array accesses by the witness variable \emph{\text{$x\_a$}} only when the value of index $i$ matches with $i\_a$ (lines 13, 14 and 17).
We remove loop header but retain loop body. 
To over-approximate the effect of removal of loop iterations we add non-deterministic assignments to all variables modified in the loop body, 
at the start of the transformed loop body and also
after the transformed loop body (lines 11 and 15). 
Note that we retain the original assignment statements too (line 12).
Since the loops at line 8 and line 14 in the original program iterate over the entire array, we equate loop iterator variable $i$ to
$i\_a$ (line 11 and 16) and the transformed loop bodies 
(lines 10--14 and lines 16--17) are executed unconditionally. 

We explain the transformation rules formally in Section~\ref{transformation}.
The transformed program can be verified by an off-the-shelf BMC. 
Note that each index will be considered in some run of the BMC since $i\_a$ is chosen non-deterministically.
Hence, if an assertion fails for any index in the original program, it fails in the transformed program too.

\section{Semantics}
\label{sec:dsa}
In this section we formalize our technique
by explaining the language and defining representation of states. 

\subsection{Language}
We formulate our analysis over a language modelled on C.
For simplicity of exposition we restrict our description to a subset of C which includes C style structures and 1-dimensional arrays.
Let \sval, \svar, and \sexpr be the sets of values computed by the program, variables appearing in the program, and expressions
appearing in the program respectively.
A value \text{$c \in \sval$} can be an integer, floating-point or boolean value.
A variable \text{$v \in \svar$}
can be a scalar variable, a structure variable, or an array variable. 
We define our program to have only one array variable denoted as \arrvar. 
However, in practice, we can handle multiple arrays in a program as explained in our technical report~\cite{Jana2016Scaling}. 
We also define \text{$\earr \subseteq \sexpr$} 
as set of array expressions of the form \text{\arrvar[\Expr]}.
A lval \Lvalue can be an array access expression or a variable. 
Let \text{$c \in \sval$}, 
\text{$\var,i \in (\svar - \{\arrvar\})$}.
We consider assignment statements, conditional statement, loop statement, and assertion statements defined by the following grammar. 
We define the grammar of our language using the following non-terminals:
Program \Prog consists of statements \Stmt which may use 
lvalues \Lvalue and expressions \Expr. We assume that programs are type correct as per C typing rules.
\begin{equation}
\renewcommand{\arraystretch}{1.2}
\begin{array}{rl}
\Prog \rightarrow & \; \,
	\Stmt
	\\
\Stmt  \rightarrow & \; \,
	\ifkw\; (\Expr) \; \Stmt \; \elsekw \; \Stmt 
	\;\; \big\lvert \;\; \ifkw\; (\Expr) \; \Stmt 
	\;\; \big\lvert \;\; \forkw \; (i = \Expr \,;\; \Expr \,;\; \Expr) \;\Stmt 
	\;\; \big\lvert 
	\\
	&
	\;\; \Stmt \; ; \; \Stmt 
	\;\; \big\lvert \;\; \Lvalue = \Expr 
	\;\; \big\lvert \;\; \assertkw(\Expr) 
	\\
\Lvalue \rightarrow & \;\,
	a[\Expr] 
	\;\; \big\lvert \;\; \var
	\\
		\label{eq:grammar.org}
\Expr \rightarrow & \; \,
	\Expr \oplus \Expr
	\;\; \big\lvert \;\; \Lvalue
	\;\; \big\lvert \;\; c
\end{array}
\end{equation}


In practice, we analyze \textsc{Ansi}-C language programs that includes functions, pointers, composite data-structures, 
all kinds of definitions, and all control structures except multi-dimensional arrays.

\subsection{Representing Program States}
We define program states in terms of memory location and the value stored in the memory location.
We distinguish between \emph{atomic} variables (such as
scalar and structure variables) whose values can be copied atomically to a memory location, from
non-atomic variables such as arrays. Since we are considering 1-dimensional arrays, the
array elements are atomic locations. 

Function \text{$\loc(a[i])$} returns the memory location corresponding to the $i^{th}$ index of array $a$.
The memory of an input program consists of all atomic locations:
\begin{align}
	\smem & = (\svar - \{\arrvar\}) \cup \big\{ \loc(\arrvar[i]) \; \big\lvert \; 0 \leq i \leq \size(\arrvar) \big\}
\label{equation:mem.org}
\end{align}
The function \size(\arrvar) returns the highest index value for array \arrvar.

A \emph{program state} is a map $\ms : \smem \to \sval$. 
\vexpr denotes the value of expression \expr in the program state \ms. 

We transform a program by creating a pair \text{$\langle \absind,\absarr \rangle$} for the array \arrvar where \absind is the witness index 
and \absarr is the witness variable.
The memory of a transformed program with additional variables is:
\begin{align}
	\smema & = (\svar - \{\arrvar\}) \cup \{\absarr\} \cup \; \{\absind\}
\label{equation:mem.trans}
\end{align}

For a transformed program, a program state is denoted by \msa and is defined over \smema. 

We explain the relation between states in original and transformed programs using an example.
Let a program \pc have an array variable $a$ and variable $k$ holding the size of the array $a$. Let
the array contain the values \text{$c_i \in \sval$}, \text{$0 \leq i < n$}, where $n \in \sval$ is the value of size of the array.
Then, a program state, $\ms$ at any program point \emph{l} can be: 
\begin{align}
\ms &= 
	\big\{  \big(k,n\big), 
         \big(\loc \left(a[0]\right),c_0\big), 
         \big(\loc \left(a[1]\right),c_1\big),
         \ldots, 
         \big(\loc \left(a[{n-1}]\right),c_{n-1} \big)
 \big\}
	\label{eq:state.p}
\end{align}

In the transformed program \pa, let \absarr and \absind be the witness variable and witness index respectively. 
Let $l'$ be the program point in \pa that corresponds to $l$ in \pc.
Then, all possible states in the transformed program at $l'$ are,
\begin{align}
	\label{eq:set.state.pa}
\msa_{0} & = \{(k,n),(i_a,0),(x_a,c_{0})\}
	\nonumber
\\
\msa_{1} & = \{(k,n),(i_a,1),(x_a,c_{1})\}
	\nonumber
\\
\ldots
	\nonumber
\\
\msa_{{n-1}} & = \{(k,n),(i_a,{n-1}),(x_a,c_{{n-1}})\}
	\nonumber
\end{align}

We now formally define how a state at a program point in the transformed program represents a state at the
corresponding program point in the original program. 

\begin{definition}
\label{def:abselement1}
Let \ms be a state
at a program point in \pc and let \msa be a state at the corresponding program point in \pa.
Then, \msa represents \ms, denoted as \text{$\msa \rightsquigarrow \ms$} if 
\[
 \msa = \left\{
		(\absind,c_1) 
  		(\absarr,c_2) 
	\right\} 
	\;\cup\;
	\left\{
		(y,c) \mid (y,c) \in \ms, 
   			y \in (\svar-\{\arrvar\})
	\right\}
	\Rightarrow \; \big(\loc(a[c_1]),c_2\big) \in \ms \; 
\]
\end{definition}


Let \assert be the assertion at line n in program P. Let \ms be a state reaching \assert in the original program with pair
\text{$\left(\loc\left(a\left[\eval{e_1}{\ms}\right]\right),\eval{e_2}{\ms}\right)$}. 
Let \msa be the state in 
transformed program, \msa represents \ms. Thus, \msa has two pairs,  
\text{$\left(\absind, \eval{e_3}{\msa}\right)$} and
\text{$\left(\absarr,\eval{e_4}{\msa}\right)$}
such that
\text{$\eval{e_3}{\msa} = \eval{e_1}{\ms}$} and \text{$\eval{e_4}{\msa} = \eval{e_2}{\ms}$} .
Hence, if the assertion \assert holds in transformed program it holds in the original program too.



\begin{figure}[!t]
\addtocounter{figure}{1}
\newcommand{\DEFRule}{%
\arrayrulecolor{lightgray}\hline\arrayrulecolor{black}
}
\begin{align*}
\T(\Expr)  & = 
	\\
	&\hspace*{-10mm}
		\Expr \equiv (\Expr_1 \oplus \Expr_2)
		\hspace*{-5mm}&\Rightarrow\hspace*{2mm}
		&
		\emit\left(\T(\Expr_1) \; \oplus \; \T(\Expr_2) \right)
		\tag{\thefigure.$E_1$}\label{transform.expr.1}
		\\
	&\hspace*{-10mm}
		\Expr \in \earr, \;
		\Expr \equiv \arrvar[\Expr_1] \;
		\hspace*{-5mm}&\Rightarrow\hspace*{2mm}
		&
		\emit\left(\left(\Expr_1 ==\absind\right)  \qkw \; \absarr \; \ckw \; nd() \right)
		\tag{\thefigure.$E_2$}\label{transform.expr.2}
		\\
	&\hspace*{-10mm}
		\text{otherwise}
		\hspace*{-5mm}&\Rightarrow\hspace*{2mm}
		& \; \emit\left(\Expr   \right)
		\tag{\thefigure.$E_3$}\label{transform.expr.3}
	\rule[-.5em]{0em}{1em}
	\\ 
	\hline
	\rule{0em}{1em}
\T(\Stmt)  & = 
	\\
	&\hspace*{-10mm}
		\Stmt \equiv (\Lvalue=\Expr), \; 
		\Lvalue \equiv \arrvar[\Expr_1] \;
		\hspace*{-5mm}&\Rightarrow\hspace*{2mm}
		&
		\emit\big(\left(\Expr_1 ==\absind\right) \qkw \; 
			\\
	& & & \phantom{\emit\big(} \absarr 
			=  \T(\Expr) \; \ckw \; \T(\Expr) \big)
		\tag{\thefigure.$S_1$}\label{transform.stmt.1}
		\\
	&\hspace*{-10mm}
		\Stmt \equiv (\Lvalue=\Expr),
		\Lvalue \not\equiv a[\Expr_1] \;
		\hspace*{-5mm}&\Rightarrow\hspace*{2mm}
		& \;\emit\left(\Lvalue=\T(\Expr)\right)
		\tag{\thefigure.$S_2$}\label{transform.stmt.2}
		\\
	&\hspace*{-10mm}
		\begin{array}[t]{@{}l@{\,}l}
		\Stmt \equiv & (\forkw(i = \Expr_1;\;\Expr_2;\;\Expr_3) \; \Stmt_1), 
		\\
		& 
		\fullarrayaccess(\Stmt), 
		\\
		& u \in \loopdefs(\Stmt_1)
		\end{array}
		\hspace*{-4.5mm}&\Rightarrow\hspace*{2mm}
		&
		\;\begin{array}[t]{@{}l@{\,}l}
		\emit\;\big( &  u=nd();
			\hspace*{8mm} // \forall u \in \loopdefs(\Stmt_1) \\
		& i=\absind;\\ 
			 & \T(\Stmt_1);\\  
			& u=nd(); 
			\hspace*{8mm} // \forall u \in \loopdefs(\Stmt_1) \\ 
		\phantom{\emit}	\; \big)
		\end{array}
		\tag{\thefigure.$S_3$}\label{transform.stmt.3}
		\\
	&\hspace*{-10mm}
		\begin{array}[t]{@{}l@{\,}l}
		\Stmt \equiv & (\forkw(i = \Expr_1;\;\Expr_2;\;\Expr_3) \; \Stmt_1),  
		\\
		& 
		\neg\fullarrayaccess(\Stmt),
		\\
		& u \in \loopdefs(\Stmt_1)
		\end{array}
		\hspace*{-5mm}&\Rightarrow\hspace*{2mm}
		&
		\;\begin{array}[t]{@{}l@{\,}l}
		\emit\;\big( & \ifkw(nd(0,1))  \\ 
		& \cskw  \; \; u=nd();
		 	\hspace*{5mm} // \forall u \in \loopdefs(\Stmt_1) \\ 
		& \;\; \;  i=nd(\loopbound(\Stmt));\\
			& \;\;\; \T(\Stmt_1); \\ 
		& \cekw \\ 
		& \; u=nd();
			\hspace*{7mm} // \forall u \in \loopdefs(\Stmt_1) \\ 
		 \phantom{\emit}\;\big)
		\end{array}
		\tag{\thefigure.$S_4$}\label{transform.stmt.4}
		\\
	&\hspace*{-10mm}
		\Stmt \equiv (\ifkw(\Expr) \; \Stmt_1 \; \elsekw \; \Stmt_2)
		\hspace*{-5mm}&\Rightarrow\hspace*{2mm}
		& \;\emit\big(\ifkw(\T(\Expr)) 
			\\
		& & & \phantom{\emit\big(} \; \T(\Stmt_1) \; \elsekw \; \T(\Stmt_2) \big)
		\tag{\thefigure.$S_5$}\label{transform.stmt.5}
		\\
	&\hspace*{-10mm}
		\Stmt \equiv (\ifkw(\Expr) \; \Stmt_1 )
		\hspace*{-5mm}&\Rightarrow\hspace*{2mm}
		& \;\emit\left(\ifkw(\T(\Expr)) \; \T(\Stmt_1)  \right)
		\tag{\thefigure.$S_6$}\label{transform.stmt.6}
		\\
	 &\hspace*{-10mm}
		\Stmt \equiv (\Stmt_1;\Stmt_2)
		\hspace*{-5mm}&\Rightarrow\hspace*{2mm}
		& \;\emit\left(\T(\Stmt_1);\T(\Stmt_2) \right)
		\tag{\thefigure.$S_7$}\label{transform.stmt.7}
		\\
	&\hspace*{-10mm}
		\Stmt \equiv (\assertkw(\Expr))
		\hspace*{-5mm}&\Rightarrow\hspace*{2mm}
		& \;\emit\left(\assertkw(\T(\Expr)) \right)
		\tag{\thefigure.$S_8$}\label{transform.stmt.8}
		\\
	&\hspace*{-10mm}
		\text{otherwise}
		\hspace*{-5mm}&\Rightarrow\hspace*{2mm}
		& \; \emit\left(\Stmt\right)
		\tag{\thefigure.$S_9$}\label{transform.stmt.9}
	\rule[-.5em]{0em}{1em}
	\\ 
	\hline
	\rule{0em}{1em}
\T(\Prog)  & = 
	\\
	&\hspace*{-10mm}
		\Prog \equiv \Stmt 
		\hspace*{-5mm}&\Rightarrow\hspace*{2mm}
		& 
		\;\begin{array}[t]{@{}l@{\,}l}
		\emit\;\big( & 
			\absind = nd\left(\size\left(a\right)\right);
				\\
			& \T(\Stmt)
			\\
		 \phantom{\emit}\;\big)
		\end{array}
		\tag{\thefigure.$P$}\label{transform.prog.1}
\end{align*}



\addtocounter{figure}{-1}
\caption{Program transformation rules. Non-terminals \Prog, \Stmt, \Expr, \Lvalue represent the code fragment in the input program
derivable from them.}
\label{fig:transform}
\end{figure}

\section{Transformation}
\label{transformation}

The transformation rules are given in Figure~\ref{fig:transform}. A transformed program
satisfies the following grammar 
derived from that of the original program (grammar \ref{eq:grammar.org}).
Let \text{$x,\absarr,\absind \in \svar$} denote scalar variable, witness variable, and witness index, respectively.
Let \text{$c,l,u \in \sval$} be values. Then,
\begin{equation}
\renewcommand{\arraystretch}{1.2}
\begin{array}{rl}
\Prog \rightarrow & \; \,
	\Init \;; \; \Stmt
	\\
\Init \rightarrow & \; \,
	\absind = nd (l,u) 
	\\
\Stmt  \rightarrow & \; \,
	\ifkw\; (\Expr) \; \Stmt \; \elsekw \; \Stmt 
	\;\; \big\lvert \;\; \ifkw\; (\Expr) \; \Stmt 
	\;\; \big\lvert \;\; \Stmt \; ; \; \Stmt 
	\;\; \big\lvert \;\; 
	\Lvalue = \Expr 
	\;\; \big\lvert \;\; \assertkw(\Expr) 
	\\
\Lvalue \rightarrow & \;\,
	\var \;\; \big\lvert \;\; \absarr \;\; \big\lvert \;\; \absind
	\\
\Expr \rightarrow & \; \,
	\Expr \oplus \Expr
	\;\; \big\lvert \;\; \Lvalue
	\;\; \big\lvert \;\; c
	\;\; \big\lvert \;\; nd()
	\;\; \big\lvert \;\; nd(l,u)
\end{array}
\label{eq:grammar.trans}
\end{equation}
The non-terminal \Init represents the initialization statements for witness index.
Witness variable is initialized in the scope same as that in the original program.

We use the functions described below in the transformation rules.
\begin{compactitem}
\item Function $nd$ returns a non-deterministically chosen value from the given range $l,u$; $l$ and $u$ being the lower and upper limit respectively. When range is not provided, $nd$ 
	returns a non-deterministic value based on the type of \Lvalue. 
\item Function \transform takes the text derived from a non-terminal and transforms it. Function \emit shows the actual
      code that would be emitted. We ignore the details of number of parameters and the type of the parameters of \emit.
      We assume that it takes the code emitted by \transform and possibly some additional statements and outputs the combined code.
      It has been used only to distinguish the transformation time activity and run time activity.
      For example, the boolean conditions in cases~\ref{transform.expr.2} and~\ref{transform.stmt.1} are not evaluated by the
      body of function \transform but is a part of the transformed code and is evaluated at run time when the transformed program is executed.
	Similar remarks apply to the \ifkw statements and other operations inside the parenthesis of \emit function.
\item Function $\fullarrayaccess(\Stmt)$ analyzes\footnote{Analysis can be over-approximated.} the characteristics of the loop \Stmt.
	\begin{itemize}
		\item When the loop \Stmt accesses array \arrvar completely, $\fullarrayaccess(\Stmt)$ returns true.
		This means that loop either reads or write all the indices of the array.
	\item When the loop \Stmt accesses array \arrvar partially, $\fullarrayaccess(\Stmt)$ returns false. This means
		that the loop may not access all the indices or some indices are being read while some other indices are being written.
	\item When loop \Stmt do not access an array, $\fullarrayaccess(\Stmt)$ returns false. 
	\end{itemize}


\item Function $\loopdefs(\Stmt)$ returns the over-approximated set of variables modified in the loop \Stmt.
	\begin{compactitem}	
	\item Scalar variables are included in this set if they appear on the left hand side of any assignment statement in \Stmt 
	(except when the RHS is a constant). 
	\item Loop iterator variable $i$ of loop \Stmt is not included in this set.
	\item 
	Array variable \arrvar 
	is included in this set when the array access expression appears on the left hand side of an assignment
	and the value of index expression is different from the current value of the loop iterator $i$.
\end{compactitem}

\item Function $\size(a)$ returns the highest index value for array \arrvar. 
\end{compactitem}
With the above functions, the transformation rules are easy to understand. Here we explain non-trivial transformations.
\begin{compactitem}
\item To choose an array index for a run, witness index (\absind) is initialized at the start of the program 
	to a non-deterministically chosen value from the range of the indices of the array (case~\ref{transform.prog.1}). 
	This value determines the array element ($a[\absind]$) represented by the witness variable (\absarr). 
\item An array access expression in LHS or RHS is replaced by the witness variable (\absarr) provided the
      values of the witness index and index expression of the array access expression match.
      If the values do not match, it implies that the element accessed 
      is not at the non-deterministically chosen index \absind.
      Hence for any other index the assignment does not happen (case~\ref{transform.stmt.1}). 
      Similarly, when any other index is read in RHS, it is
      replaced with a non-deterministic value (case~\ref{transform.expr.2}). 
\item Loop iterations are eliminated by removing the loop header containing initialization, test, and increment expression for 
	loop iterator variable. The loop bodies are transformed as follows :
	\begin{compactitem}
	\item Each variable in the set returned by $\loopdefs(\Stmt)$ is assigned a non-deterministic value 
		at the start of the loop body and also after the loop body.
	These assignments ensure that values dependent on loop iterations are 
	over-approximated when used inside or outside the loop body. 
       \item The loop iterator $i$ is a special scalar variable. 
	 A loop \Stmt where \fullarrayaccess(\Stmt) holds (case~\ref{transform.stmt.3}) essentially means that loop bound is same as the array size 
	 and array is accessed using loop iterator as index.
	 Hence it is safe to replace array access with \absarr where the 
	 values of loop iterator and index expression match. 
	 To ensure this we equate loop iterator with \absind.
	 This models the behaviour of the original program precisely.
	 However, when $\fullarrayaccess(\Stmt)$ does not hold (case~\ref{transform.stmt.4}), we assign loop iterator $i$ to a 
	 non-deterministically chosen value from the loop bound. 
 \item Each statement in the loop body is transformed 
	as per the transformation rules. 
 \item Finally, the entire loop body is made conditional using a non-deterministically chosen true/false value 
	 when $\fullarrayaccess(\Stmt)$ does not hold. 
	This models the partial
        accesses of array indices which imply that some of the values defined before the loop may reach after the loop.
	However, the transformed loop body is unconditionally executed when $\fullarrayaccess(\Stmt)$ holds.
      \end{compactitem}
 \end{compactitem}

\section{Soundness}
\label{sec:proof}

\externaldocument{transformation}

This section outlines the claim that the proposed transformation is sound, i.e. if the transformed
program is safe, then so is the original program. As discussed in Section~\ref{sec:dsa}, the
soundness is immediate if the abstract states ``represent'' the
original states. We, therefore, prove that the proposed transformations ensure
that the \emph{represents} relation, $\rightsquigarrow$, holds between abstract and
original states. 
For the base case, we prove that $\rightsquigarrow$ holds in the beginning -
before applying any transformation (Lemma~\ref{lemm:program.start}). 
In the inductive step, we prove
that if $\rightsquigarrow$ holds at some stage during the transformation, then the subsequent
transformation continues to preserve $\rightsquigarrow$ (Lemma~\ref{lemma:strucinduc.abs}). We prove this by structural
induction on program transformations. We prove that each transformed expression is over-approximated when $\rightsquigarrow$ holds in (Lemma~\ref{lemm:rhs1}).
Detailed proof is provided in our technical report~\cite{Jana2016Scaling}.


\newcommand{\proofOutline}{\noindent\text{\em Proof Outline.\ }}

\begin{lemma}
\label{lemm:program.start}
Let the start of the original program (i.e. the program point just before the code derivable from non-terminal \Stmt in
production \text{$\Prog \rightarrow \Stmt$} in grammar defined in equation(~\ref{eq:grammar.org}) be denoted by $l$. 
The corresponding program point in the 
transformed program \pa, denoted by $l'$, is just after \Init and just before the non-terminal $S$ in production \text{$\Prog \rightarrow \Init \;; \; \Stmt$}
(Grammar in equation~\ref{eq:grammar.trans}).
Let \ms and \msa be the states at $l$ and $l'$ in \pc and \pa respectively. 
Then,
\text{$\msalp \rightsquigarrow\msl$}.
\end{lemma}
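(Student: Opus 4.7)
The plan is to directly verify the definition of the represents relation $\rightsquigarrow$ against the concrete forms of the two states, using the observation that only one statement has been executed between the two program points.

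First, I would spell out $\sigma_l$. At the start of the original program, the state is just the input state: a total map over $\mathbb{M}$ assigning values to every non-array variable $y \in \mathbb{V} - \{a\}$ and to every array location $\ell(a[i])$ for $0 \le i \le \size(a)$. Next, I would spell out $\sigma'_{l'}$. The only code executed between the start of the transformed program and $l'$ is the Init statement $i_a = nd(\size(a))$ introduced by case~\ref{transform.prog.1}. Therefore $\sigma'_{l'}$ agrees with the input state on all variables in $\mathbb{V} - \{a\}$ (these are untouched by Init), contains the new pair $(i_a, c_1)$ for some $c_1 \in [0, \size(a)]$ chosen by $nd$, and contains a pair $(x_a, c_2)$ where $c_2$ is unconstrained since the transformation has not yet assigned $x_a$ (consistent with BMC's treatment of an uninitialised scalar as non-deterministic).

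With these forms in hand, the hypothesis in Definition~\ref{def:abselement1} is satisfied verbatim: $\sigma'_{l'}$ decomposes exactly as $\{(i_a,c_1),(x_a,c_2)\} \cup \{(y,c) \mid (y,c) \in \sigma_l,\; y \in \mathbb{V} - \{a\}\}$. It then remains to verify the consequent $(\ell(a[c_1]), c_2) \in \sigma_l$. Since $c_1$ lies in the valid index range, $\ell(a[c_1]) \in \mathbb{M}$ and $\sigma_l$ assigns it the definite value $v = \sigma_l(\ell(a[c_1]))$. Because $c_2$ ranges over all values allowed by the non-determinism on $x_a$, the trace set at $l'$ includes the trace in which $c_2 = v$; that trace is the witness for the represents relation with respect to $\sigma_l$, and so $\sigma'_{l'} \rightsquigarrow \sigma_l$ holds.

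The main subtlety I anticipate is justifying the treatment of $x_a$ as non-deterministic at $l'$, since Init in grammar~(\ref{eq:grammar.trans}) only initialises $i_a$ and not $x_a$. The cleanest way to dispatch this is to appeal to the semantics used throughout the paper, where uninitialised scalars may take any value of their type, so that the existential choice of $c_2 = \sigma_l(\ell(a[c_1]))$ is among the traces explored. Once this is accepted, the remainder of the argument is a direct unfolding of definitions, and the lemma serves cleanly as the base case for the structural induction carried out in Lemma~\ref{lemma:strucinduc.abs}.
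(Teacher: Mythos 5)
Your unfolding of the two states matches the paper's argument up to the treatment of \absarr, and you correctly identified that as the crux --- but your resolution of it diverges from the paper and leaves a genuine gap relative to the lemma as stated. The paper does not leave \absarr uninitialised: Section~\ref{transformation} stipulates that the ``witness variable is initialized in the scope same as that in the original program,'' and the paper's proof outline leans on precisely this fact --- ``the initial value of the element of array $a[\absind]$ is assigned to \absarr.'' So at $l'$ the value $c_2$ of \absarr is pinned to the initial value of $a[c_1]$ (in the motivating example both $a$ and \absarr are globals, zero-initialised alike), and the consequent $\big(\loc(a[c_1]),c_2\big) \in \msl$ of Definition~\ref{def:abselement1} holds for \emph{every} state at $l'$, with no appeal to nondeterminism on \absarr. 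Under your reading, by contrast, the antecedent of the implication in Definition~\ref{def:abselement1} is true while the consequent is false for any initial state whose junk value $c_2$ differs from $\msl(\loc(a[c_1]))$; for those states $\msalp \rightsquigarrow \msl$ simply fails. What your argument actually establishes is the weaker existential statement ``\emph{some} state at $l'$ represents \msl,'' not the lemma's claim about the states at corresponding points.

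The weaker statement is not worthless --- it would still suffice for the soundness theorem, since over-approximating \absarr only adds behaviours and the theorem's proof in any case selects the trace with $\absind = c$ --- but it does not feed cleanly into Lemma~\ref{lemma:strucinduc.abs}, whose premise $\msalp \rightsquigarrow \msl$ is consumed for the states reached along matched traces, and it would quietly destroy the precision results. Note that $\fullarrayaccess$ holds also for loops that only \emph{read} all indices; for such a program satisfying the rules of Section~\ref{sec::prec}, the initial value of \absarr reaches the assertion, and under your variant that value is arbitrary junk rather than the array's actual initial value, producing spurious counterexamples in \pa that \pc does not admit --- contradicting Theorem~\ref{th:comp}. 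So the subtlety you flagged is dispatched not by appealing to uninitialised-scalar nondeterminism, as you propose, but by the transformation itself: \Init covers only \absind, while \absarr receives the original array's declaration-scope initialisation, which is exactly what makes the represents relation hold unconditionally at the base case.
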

\proofOutline
Since the initial values of non array variables are preserved, the initial value of the element of array $a[\absind]$ is assigned to \absarr, and
\absind is non-deterministically chosen, the lemma holds.

\begin{lemma}
\label{lemm:rhs1}
Let $\msl$ be a state at a program point $l$ in \pc 
and $\msalp$ be a state at the corresponding program point $l'$ in transformed program \pa.
Consider an arbitrary expression \text{$e \in \sexpr$} just after $l$ in original program \pc.
Then, 
\[
\text{$\msalp \rightsquigarrow\msl \Rightarrow \eval{\T(e)}{\msalp} \supseteq \eval{e}{\msl}$}.
\]
\end{lemma}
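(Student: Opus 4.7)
The plan is to prove the lemma by structural induction on the expression $e$ following the grammar productions for $\Expr$ in equation~\ref{eq:grammar.org}. At each step I would appeal to the definition of $\rightsquigarrow$ (Definition~\ref{def:abselement1}) to relate values in $\msalp$ and $\msl$, and to the transformation rules \ref{transform.expr.1}--\ref{transform.expr.3} to determine the form of $\T(e)$. The overall shape of the argument is that the straightforward cases follow because $\rightsquigarrow$ literally copies non-array variables and because $\T$ leaves constants and scalar lvalues unchanged, while the array-read case requires the nondeterminism introduced by rule~\ref{transform.expr.2} to absorb the gap.

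For the base cases, a constant $c$ is left untouched by rule~\ref{transform.expr.3} and so $\eval{\T(c)}{\msalp} = \{c\} = \eval{c}{\msl}$. A scalar variable $x \in \svar - \{\arrvar\}$ is also left untouched, and by $\rightsquigarrow$ we have $(x,c) \in \msalp$ iff $(x,c) \in \msl$, giving equality of evaluations. For the inductive case of a binary operation $e \equiv e_1 \oplus e_2$ (rule~\ref{transform.expr.1}), applying the induction hypothesis to $e_1$ and $e_2$ and lifting the inclusion through pointwise application of $\oplus$ yields $\eval{\T(e)}{\msalp} \supseteq \eval{e}{\msl}$ directly.

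The main technical case, and the one I expect to be the principal obstacle, is the array access $e \equiv \arrvar[e_1]$, governed by rule~\ref{transform.expr.2}, which rewrites $e$ as the ternary $(e_1 == \absind) \,?\, \absarr \,:\, nd()$. Let $c_1 = \eval{e_1}{\msl}$, $c_a = \msalp(\absind)$ and $c_b = \msalp(\absarr)$; by $\rightsquigarrow$ we have $c_b = \msl(\loc(\arrvar[c_a]))$. I would split on whether $c_1 = c_a$. In the matching case the condition can evaluate to true, the then-branch yields $\absarr = c_b = \msl(\loc(\arrvar[c_1])) = \eval{e}{\msl}$, so the original value lies in $\eval{\T(e)}{\msalp}$. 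In the non-matching case the condition can evaluate to false, the else-branch produces $nd()$, and since this nondeterministic value ranges over all type-appropriate values, $\eval{e}{\msl}$ is again contained in $\eval{\T(e)}{\msalp}$. The delicate point is justifying that both branches of the ternary are genuinely available in the set-valued semantics $\eval{\cdot}{\msalp}$; once the collecting semantics over nondeterministic choices is set up, the sub-case analysis is routine.

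One subtlety worth flagging is that rule~\ref{transform.expr.2} uses $e_1$ itself rather than $\T(e_1)$ inside the ternary. Under the grammar considered here this causes no issue, because an index expression built only from scalar variables, constants and arithmetic operators evaluates identically in $\msl$ and $\msalp$ by $\rightsquigarrow$; if nested array-index expressions were permitted, the proof would need to invoke the induction hypothesis on $e_1$ explicitly, which would only strengthen the over-approximation.
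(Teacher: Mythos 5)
Your proposal is correct and takes essentially the same route as the paper: the paper's proof is exactly a structural induction on the productions for \Expr in grammar~(\ref{eq:grammar.org}), with constants and scalars preserved under rule~\ref{transform.expr.3}, binary operators lifted under rule~\ref{transform.expr.1}, and the array-read case split on whether the index value matches \absind, using Definition~\ref{def:abselement1} for the matching branch and the nondeterminism of $nd()$ for the other. One minor note: grammar~(\ref{eq:grammar.org}) does permit nested array accesses inside index expressions (since $\Lvalue \rightarrow \arrvar[\Expr]$ and \Expr derives \Lvalue), so the fallback you flag at the end---invoking the induction hypothesis on $e_1$ inside the ternary of rule~\ref{transform.expr.2}---is the reading actually required for the transformed program to conform to grammar~(\ref{eq:grammar.trans}), not a hypothetical; as you correctly observe, it goes through and only widens the over-approximation.
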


\proofOutline
Since $e$ is derived from \Expr (grammar~\ref{eq:grammar.org}),
the over-approximation of values can be proved by structural induction on the productions for \Expr.

\begin{lemma} 
Let $l$ and $m$ be the program points just before and after a statement $s$ in \pc and let $\msl$ and $\msm$ be 
the states at $l$ and $m$ respectively.
Let $l'$ and $m'$ be the program points just before and after the corresponding transformed statement $\T(s)$ in \pa.
Let \msalp and \msamp be the states at $l'$ and $m'$ respectively.
Then, \text{$\msalp \rightsquigarrow \msl \Rightarrow \msamp \rightsquigarrow \msm$}.


\label{lemma:strucinduc.abs}
\end{lemma}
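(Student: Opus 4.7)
The plan is to proceed by structural induction on the derivation of $s$ from the non-terminal $\Stmt$ in grammar~(\ref{eq:grammar.org}), mirroring the case split of the transformation rules in Figure~\ref{fig:transform}. Throughout, we assume $\msalp \rightsquigarrow \msl$ and exhibit a resolution of the $nd(\cdot)$ primitives inside $\T(s)$ under which the resulting state $\msamp$ satisfies $\msamp \rightsquigarrow \msm$. Lemma~\ref{lemm:rhs1} is invoked whenever a rule recurses $\T$ into a subexpression, to justify that the set $\eval{\T(e)}{\msalp}$ contains the desired value $\eval{e}{\msl}$ so the nondeterminism inside can be resolved accordingly.

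The straightforward cases are handled first. For the sequence $\Stmt_1;\Stmt_2$ (rule~\ref{transform.stmt.7}) we compose two applications of the induction hypothesis through an intermediate program point. For the conditionals (rules~\ref{transform.stmt.5} and~\ref{transform.stmt.6}), Lemma~\ref{lemm:rhs1} lets the transformed guard evaluate to the same boolean as the original, so we select the branch taken in the original and apply the induction hypothesis. For the assertion (rule~\ref{transform.stmt.8}) and the \emph{otherwise} (rule~\ref{transform.stmt.9}) case, the state is not changed modulo safety, so the representation relation is preserved trivially. For the scalar assignment $x = \Expr$ (rule~\ref{transform.stmt.2}), Lemma~\ref{lemm:rhs1} lets us resolve $\T(\Expr)$ to $\eval{\Expr}{\msl}$, so $\msamp(x) = \msm(x)$, while the witness pair $(\absind, \absarr)$ is untouched and still represents the unchanged array portion of $\msm$. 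For the array write $a[\Expr_1] = \Expr$ (rule~\ref{transform.stmt.1}), we case split on whether $\eval{\Expr_1}{\msl}$ equals the value $c_1$ of $\absind$: if so, the emitted conditional updates $\absarr$ to a value in $\eval{\T(\Expr)}{\msalp}$ which we resolve to $\eval{\Expr}{\msl}$, matching the new value of $a[c_1]$ in $\msm$; otherwise $\absarr$ is not modified and the index updated in $\msm$ differs from $c_1$, so the relation is maintained.

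The substantive cases are the two loops. For a loop with full array access (rule~\ref{transform.stmt.3}), the key observation is that by $\fullarrayaccess$ the original loop writes every array index, and in particular writes $a[c_1]$ at some specific iteration where the loop iterator equals $c_1$. In the transformed body, we resolve the prefix block of assignments $u = nd()$ for $u \in \loopdefs(\Stmt_1)$ to the values these variables hold in the original at the start of that iteration, and we have $i = \absind = c_1$ by the rule; the induction hypothesis applied to $\T(\Stmt_1)$ then produces a state agreeing with the original immediately after that iteration, in particular fixing $\absarr$ to the correct final value of $a[c_1]$. The suffix $u = nd()$ block is resolved to the values the scalar variables take in $\msm$ after the whole loop terminates, giving $\msamp \rightsquigarrow \msm$. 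The partial-access case (rule~\ref{transform.stmt.4}) proceeds similarly, but we additionally use the outer nondeterministic \ifkw: if $a[c_1]$ is never written by the loop we take the false branch, leaving $\absarr$ at its pre-loop value which by hypothesis already represents the unchanged $a[c_1]$; otherwise we take the true branch and resolve the nondeterministic $i \in \loopbound(\Stmt)$ to the last iteration at which $a[c_1]$ was written, then argue as in the full-access case.

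The main obstacle is the loop cases: a single execution of the transformed body must be shown to witness, for any run of the original loop, the particular iteration that last touches $a[c_1]$ together with the final values of every scalar variable modified across all iterations. The two surrounding blocks of $u = nd()$ assignments, together with the $\fullarrayaccess$ versus partial-access distinction, are precisely what permit such a resolution; the delicate part is arguing that $\loopdefs(\Stmt_1)$ indeed over-approximates every variable whose value at the witnessed iteration may differ from its value at loop entry or exit, including the case where the loop body writes to array indices other than $c_1$, which is accounted for by including $\arrvar$ in $\loopdefs$ whenever its index expression may differ from $i$.
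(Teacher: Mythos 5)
Your proposal is correct and takes essentially the same route as the paper: the paper's own proof is just a one-line outline prescribing structural induction on \Stmt, and you carry out exactly that induction case by case over the rules of Figure~\ref{fig:transform}, invoking Lemma~\ref{lemm:rhs1} to resolve nondeterminism in transformed expressions and resolving the $nd()$ blocks around loop bodies to witness the iteration that touches $a[\absind]$, which is precisely the argument the paper defers to its technical report. One small caveat: since the loop iterator $i$ is excluded from \loopdefs, its value at $m'$ in \pa remains $\absind$ rather than the original loop's exit value, so your claim that the suffix $nd()$ block restores all scalars of \msm does not literally cover $i$ --- a wrinkle inherited from the paper's Definition~\ref{def:abselement1} (under whose literal implication form the relation still holds) rather than a flaw introduced by your argument.
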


\proofOutline
Since statement $s$ is derived from non-terminal \Stmt in the grammar~\ref{eq:grammar.org}
the lemma can be proved by structural induction on \Stmt.

\begin{theorem}
If the assertion \assert is violated in the original program \pc, then it will be violated in transformed program \pa also. 
\end{theorem}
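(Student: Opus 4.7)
The plan is to establish the theorem by exhibiting, for any violating execution of \pc, a corresponding execution of \pa that also violates \assert. Informally, the transformation is over-approximative, so every concrete behaviour of \pc can be simulated by some resolution of the non-determinism in \pa. I would proceed by a trace-based induction that threads together the three lemmas already established: Lemma~\ref{lemm:program.start} supplies the base case, Lemma~\ref{lemma:strucinduc.abs} advances the \emph{represents} relation along matching traces, and Lemma~\ref{lemm:rhs1} transfers falsity of the guarded expression at the assertion site.

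Concretely, I would fix a trace $t$ of \pc that reaches \assert in a state \msm with $\eval{e}{\msm} = \false$, where $e$ is the expression guarded by \assert. To construct a matching trace $t'$ of \pa, I would choose the value nondeterministically assigned to \absind in \Init to be an index $c_1 \in [0,\size(\arrvar)]$ that witnesses the violation in $t$ (for instance, an index read in $e$ whose value makes the assertion fail). Lemma~\ref{lemm:program.start} then yields $\msalp \rightsquigarrow \msl$ at the corresponding starting points. I would then walk along $t$ statement by statement, executing the transformed counterpart $\T(s)$ in $t'$; the nondeterministic values emitted by the transformation --- the $nd()$ assignments to \loopdefs before and after a transformed loop body, the nondeterministic guard in rule~(\ref{transform.stmt.4}), and the nondeterministic iterator used in non-full-access loops --- are resolved in $t'$ to mimic the values observed in $t$. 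Each individual step propagates the \emph{represents} relation by Lemma~\ref{lemma:strucinduc.abs}, and composing these steps yields $\msamp \rightsquigarrow \msm$ at the assertion site.

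At the assertion, rule~(\ref{transform.stmt.8}) emits $\assertkw(\T(e))$, and Lemma~\ref{lemm:rhs1} gives $\eval{\T(e)}{\msamp} \supseteq \eval{e}{\msm}$. Since $\eval{e}{\msm} = \false$, the value \false lies in the set of possible evaluations of $\T(e)$, so resolving the remaining nondeterminism inside $\T(e)$ (the $nd()$ produced by rule~(\ref{transform.expr.2}) for non-matching array reads) to this witness yields a complete run of \pa in which \assert fails, completing the argument.

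The main obstacle I anticipate is the loop case embedded in Lemma~\ref{lemma:strucinduc.abs}: many original iterations are collapsed into a single execution of the transformed body, so one must argue that the surrounding $nd()$ assignments to \loopdefs cover every per-iteration value actually appearing in $t$, and that the witness variable \absarr continues to track $a[\absind]$ across the loop --- especially when \fullarrayaccess does not hold, in which case the body may even be skipped entirely and the \emph{represents} relation must nevertheless be re-established at loop exit. Threading the choice of $c_1$ through the loop so that the one retained iteration of $\T(\Stmt_1)$ in $t'$ corresponds to the iteration of $t$ that last touches $a[c_1]$ is the delicate technical point that underwrites the whole argument.
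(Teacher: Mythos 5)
Your proposal is correct and takes essentially the same route as the paper's own proof: instantiate the nondeterministic choice of \absind with the failing index, use Lemma~\ref{lemm:program.start} as the base case and Lemma~\ref{lemma:strucinduc.abs} inductively to carry the \emph{represents} relation to the assertion site, then apply Lemma~\ref{lemm:rhs1} so that the falsifying value lies in the over-approximated evaluation of the transformed assertion expression. The only difference is expository --- you spell out the trace construction and flag the loop-collapsing subtlety explicitly, whereas the paper leaves those details inside the (merely outlined) proofs of the supporting lemmas.
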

\begin{proof}
Let the assert get violated for some $a[c]$. Since $\absind$ is initialized
non-deterministically it can take the value $c$ and we have shown in Lemma~\ref{lemm:rhs1} that
all expressions in $\pa$ are over-approximated. Lemma~\ref{lemm:program.start} and Lemma~\ref{lemma:strucinduc.abs} ensure the premise for Lemma~\ref{lemm:rhs1}.
Hence the theorem follows.

\end{proof}

\newcommand{\Defset}{\text{\sf\em$\mathbb{S}_\text{\em def}$}\xspace}
\newcommand{\ImpVarset}{\text{\sf\em$\mathbb{V}_\text{\em imp}$}\xspace}
\newcommand{\ImpExpset}{\text{\sf\em$\mathbb{E}_\text{\em imp}$}\xspace}

\section{Precision}
\label{sec::prec}

We characterize the assertions for which our transformation is precise -- an assertion will fail in 
\pa if and only if it does so in \pc.
We denote such an assertion as \assertinv. We focus on \assertinv in a loop.
A program can have array accesses outside loops too. In such cases we do not claim precision; as per our experience such situations
are rare in programs with large-sized arrays.

Our transformations
replace array access expressions and loop statements while the statements involving
scalars alone outside the loop remain unmodified. Hence precision criteria need to focus on the statements within loops and not outside it.

Let assertion \assertinv be in loop statement $S_\assertinv$.
Let \ImpVarset be the set of variables and \ImpExpset be the set of array access expressions on which \assertinv is 
data or control dependent within the loop $S_\assertinv$.
Let the set of loop statements from where definitions reach \assertinv be denoted by \Defset, note that this
set is a transitive closure for data dependence.
Our technique is precise when:
\begin{itemize}
\item $\fullarrayaccess(S)$ holds for each \text{$S \in \{ S_\assertinv\} \cup \Defset$} \hfill(rule $l_1$)
\item If  $a[e] \in \ImpExpset$ then
	\begin{itemize}
	\item the index expression $e = i$ where $i$ is the loop iterator of loop $S_\assertinv$  \hfill 
		(rule $a_2$)
 	\item $\arrvar \notin \loopdefs(S)$ where \text{$S \in \{ S_\assertinv\} \cup \Defset$} \hfill (rule $a_3$)
\end{itemize}
\item  If $\var \in \ImpVarset$ then
	$\var \notin \loopdefs(S)$ where \text{$S \in \{ S_\assertinv\} \cup \Defset$} \hfill (rule $s_4$)
\item For an assignment statement of the form $a[e_1] = e_2$ in loop $S$ where $S \in \Defset$,  
      \begin{itemize}
       \item if $e_2$ is an array access expression then it must be of the form $a[i]$ where $i$ is the loop iterator 
	       of loop $S$ 	\hfill (rule $d_5$)
	\item if $e_2$ is \var then $\var \notin \loopdefs(S)$ where \text{$S \in \Defset$} \hfill 
		(rule $d_6$)
	\end{itemize}

\end{itemize}

\begin{theorem}
\label{th:comp}
If the assertion \assertinv; that satisfies above rules; holds in the original program \pc, then it will hold in the 
transformed program \pa also.

\end{theorem}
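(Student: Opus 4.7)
The plan is to prove the contrapositive: if the assertion \assertinv fails in \pa, then it fails in \pc. Unlike the soundness proof where \text{$\sigma' \rightsquigarrow \sigma$} only guarantees over-approximation (Lemma~\ref{lemm:rhs1}), here I need a stronger invariant: for the specific trace witnessing the counterexample in \pa, the value of \text{$\T(e_{\assertinv})$} in the transformed state equals the value of \text{$e_{\assertinv}$} in some corresponding concrete state of \pc. I would therefore strengthen the relation $\rightsquigarrow$ to a \emph{precise representation} relation $\approx$, which in addition to $\rightsquigarrow$ requires that for every \text{$x \in \ImpVarset$} and \text{$a[e] \in \ImpExpset$} the concrete and abstract values agree exactly.

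First I would establish the base case along the lines of Lemma~\ref{lemm:program.start}: at the start of the program (and more importantly, at the entry of loop $S_{\assertinv}$ reached from it) $\approx$ holds, because all scalars carry the same values and $\absarr$ holds the initial value of $a[\absind]$. Next I would prove a precise analogue of Lemma~\ref{lemm:rhs1}: for any expression $e$ appearing in the computation of \assertinv, \text{$\msalp \approx \msl \Rightarrow \eval{\T(e)}{\msalp} = \eval{e}{\msl}$}. Here rule~$a_2$ is essential: any \text{$a[e'] \in \ImpExpset$} has $e' = i$ equal to the loop iterator of $S_{\assertinv}$, and by rule~$l_1$ the transformation rule~\ref{transform.stmt.3} sets $i = \absind$, so the conditional \text{$(e' == \absind)\,?\,\absarr\,:\,nd()$} of rule~\ref{transform.expr.2} deterministically picks the \absarr branch, matching the concrete array element precisely.

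The inductive step mirrors Lemma~\ref{lemma:strucinduc.abs} but for $\approx$. I would carry a structural induction over the syntactic parent of \assertinv. For each transformed loop \text{$S \in \{S_{\assertinv}\} \cup \Defset$}, rule~$l_1$ ensures the loop body is taken unconditionally (rule~\ref{transform.stmt.3}) with $i = \absind$, so there is a one-to-one correspondence between the single iteration executed in \pa and the concrete iteration of \pc that assigns to $a[\absind]$. Rules~$a_3$, $s_4$ guarantee that no variable or array in $\ImpVarset \cup \ImpExpset$ lies in $\loopdefs(S)$, so the nondeterministic clobbering injected at the boundary of the transformed loop body does \emph{not} touch any value relevant to \assertinv; the $\approx$ part is therefore preserved across the loop, and over-approximation elsewhere is harmless. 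For an assignment \text{$a[e_1] = e_2$} inside \text{$S \in \Defset$}, rules~$d_5$ and $d_6$ ensure that $e_2$ evaluates to the same value in \pc and \pa (either $e_2 \equiv a[i]$ which becomes \absarr, or $e_2$ is a variable not clobbered by $\loopdefs$), and rule~\ref{transform.stmt.1} writes \absarr exactly when $e_1 == \absind$, which by rule~$a_2$ and $l_1$ coincides with writing $a[\absind]$ in \pc.

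Combining these, at the assertion point $l_{\assertinv}$ we have \text{$\msalp \approx \msl$}, and by the precise version of Lemma~\ref{lemm:rhs1}, \text{$\eval{\T(e_{\assertinv})}{\msalp} = \eval{e_{\assertinv}}{\msl}$}. Hence the failing transformed trace (with some witnessed value of $\absind$) corresponds to a concrete trace where \assertinv fails at index $\absind$, proving the contrapositive. The main obstacle I anticipate is the case analysis for loops in $\Defset$: I must argue carefully that not only is the single abstract iteration of~\ref{transform.stmt.3} a faithful mirror of some concrete iteration, but that the remaining concrete iterations (those not writing $a[\absind]$) are irrelevant because they only update elements outside $\ImpExpset$ under the combined force of rules $a_2$, $a_3$, $d_5$, and $d_6$; formalizing this iteration-skipping argument is the delicate piece that makes the rest of the induction go through.
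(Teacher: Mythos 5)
Your proposal is correct, and at its core it runs on exactly the same mechanism as the paper's proof. The paper argues directly: it audits the three nondeterminism-introducing transformation cases (\ref{transform.stmt.3}, \ref{transform.stmt.4}, \ref{transform.expr.2}) and checks, rule by rule, that none of them injects a nondeterministic value into anything \assertinv depends on --- rule $l_1$ rules out case~\ref{transform.stmt.4} and forces $i=\absind$ via case~\ref{transform.stmt.3}; rule $a_2$ together with $l_1$ makes the ternary of case~\ref{transform.expr.2} deterministically select \absarr; rules $a_3$ and $s_4$ suppress the $\absarr=nd()$ and $\var=nd()$ assignments for assertion-relevant locations; rules $d_5$ and $d_6$ keep the right-hand sides of defining assignments exact. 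Your six invocations of the rules map one-to-one onto the paper's bullet list, so the essential content is identical. Where you genuinely differ is in the packaging: the paper stops at the audit and concludes that the assertion-relevant computation in \pa is exact, whereas you prove the contrapositive through a strengthened relation $\approx$ (exact agreement on \ImpVarset and \ImpExpset, rather than the mere representation $\rightsquigarrow$), an equality analogue of Lemma~\ref{lemm:rhs1}, and an inductive step mirroring Lemma~\ref{lemma:strucinduc.abs}. That extra structure buys something real: it makes explicit the trace correspondence between the single unconditional abstract execution of the loop body and the concrete iteration with $i=\absind$, and it surfaces the iteration-skipping obligation --- that the remaining concrete iterations of $S_\assertinv$ and of loops in \Defset cannot disturb $a[\absind]$ or the relevant scalars, precisely because of $a_2$, $a_3$, and $s_4$ --- which the paper's proof leaves implicit. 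In exchange, the paper's version is far shorter and reuses the soundness lemmas wholesale instead of re-proving strengthened variants; your version would stand alone as a rigorous argument and would localize exactly where precision breaks if any of the rules were dropped.
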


\begin{proof}
	The transformed program is over-approximative because our transformation rules (~\ref{transform.stmt.3},~\ref{transform.stmt.4},~\ref{transform.expr.2}) 
	introduce non-deterministic values.
	We prove this theorem by showing that if assertion is of the form \assertinv then 
	none of these transformation rules introduce non-deterministic values in the transformed program.

\begin{itemize}
	\item Since rule $l_1$ holds unconditionally, case~\ref{transform.stmt.4} will not apply. 
	Hence no extra paths are added in transformed program.
		Also, since case~\ref{transform.stmt.3} applies, assignment $i=\absind$ will be added for 
			$S_\assertinv$ and the loop statements in \Defset.
	\item When rule $a_2$ holds, since rule $l_1$ holds $a[e]$ get replaced by \absarr always (case~\ref{transform.expr.2}).
	\item When rule $a_3$ holds, assignment $\absarr=nd()$ is not added (case~\ref{transform.stmt.3}).
	\item When rule $s_4$ holds, assignment $\var=nd()$ is not added (case~\ref{transform.stmt.3}).
	\item When rule $d_5$ holds, since rule $l_1$ holds $a[e]$ in RHS gets replaced with \absarr (case~\ref{transform.expr.2}).
	\item When rule $d_6$ holds, scalars in RHS are not assigned with a non-deterministic value.
	
\end{itemize}
\end{proof}

Note that rule $s_4$ is a very strong condition to ensure that non-deterministic values do not reach \assertinv.
We can relax this rule 
when $\var \in \loopdefs(S_\assertinv)$ under these two conditions: 
\begin{itemize}
\item definition of \var appears before the assert statement in the loop
\item \var is defined with a constant or using loop iterator $i$ only.
\end{itemize}

None of the transformation rules replace variable \var. Definition of \var to non-deterministic value ($\var = nd()$) 
gets re-defined by original assignment (retained in the transformed loop body) appearing before the assert statement.
Since \var is defined with a constant or $i$ ($i=\absind$ is added for $S_\assertinv$),
its value is not over-approximated.

\begin{table}[t]
\centering
\caption{Results on SV-COMP Benchmark Programs.}
\label{svcomp_table}
\begin{tabular}{|c|c|c|c|c|c|}
\hline
\#programs = 118 & \begin{tabular}[c]{@{}c@{}}\#correct\\ true\end{tabular} & \begin{tabular}[c]{@{}c@{}}\#correct \\ false\end{tabular} & \begin{tabular}[c]{@{}c@{}}\#incorrect \\ true\end{tabular} & \begin{tabular}[c]{@{}c@{}}\#incorrect \\ false\end{tabular} & \#no result \\ \hline \hline
Expected Results & 84 & 34 & - & - & 0 \\ \hline \hline
$CBMC_\alpha$ & 47 & 6 & 6 & 0 & 59 \\ \hline
$CBMC_\beta$ & 9 & 5 & 0 & 0 & 104 \\ \hline
Transformation+$CBMC_\beta$ & 25 & 34 & 0 & 59 & 0 \\ \hline \hline
\multicolumn{6}{|c|}{$CBMC_\alpha$ - SV-COMP2016 (unsound) CBMC, $CBMC_\beta$ - sound CBMC 5.4} \\ \hline
\end{tabular}
\end{table}

\section{Experimental Evaluation}
\label{exp}
We have implemented our transformation engine using static
analysis\footnote{PRISM, a static analyzer generator developed at TRDDC, Pune
~\cite{chimdyalwar2011effective,khare2011static}}.  It supports
\textsc{Ansi}-C programs with 1-dimensional arrays. The experiments are
performed on a 64-bit Linux machine with 16 Intel Xeon processors running at
2.4GHz, and 20GB of RAM. More details of optimization and implementation, including handling of multiple arrays,
are provided in our
technical report~\cite{Jana2016Scaling}.

Our transformation engine outputs C programs. Although we could take
any off-the-shelf BMC for C program to verify the transformed code, we use CBMC
in our experiments as it is known to handle all the constructs of
\textsc{Ansi}-C.  We discuss the results of our experiments on academic
benchmarks and industry codes.  For want of space, we omit the results of
various BMCs on patterns from industry code; those results are shared in our
technical report~\cite{Jana2016Scaling}.


\subsection{Experiment 1 : SV-COMP Benchmarks} SV-COMP
benchmarks~\cite{svcompbench} contain an established set of programs under
various categories intended for comparing software verifiers.
Results for \emph{ArraysReach}\footnote{Programs in \emph{ArrayMemSafety}
access arrays without using index and cannot be transformed.} from the
\emph{array} category
for CBMC used in SV-COMP 2016 ($CBMC_\alpha$), CBMC 5.4 ($CBMC_\beta$) and CBMC
5.4 on transformed programs (Transformation+$CBMC_\beta$) are
consolidated\footnote{Case by case results available at \\
\htmladdnormallink{https://sites.google.com/site/datastructureabstraction/home/sv-comp-benchmark-evaluation-1}{https://sites.google.com/site/datastructureabstraction/home/sv-comp-benchmark-evaluation-1}}
in Table~\ref{svcomp_table}.
\emph{ArraysReach} has 118 programs. $CBMC_\alpha$, an unsound version of CBMC,
gave correct results for 53 programs. 
However, $CBMC_\beta$ gave correct results for 14 programs.
We compare the results of Transformation+$CBMC_\beta$ on three criteria:

\begin{compactitem}
\item Scalability: it scaled up for all 118 programs.
\item Soundness: it gave sound results for all 118 programs. For the 6
program for which $CBMC_\alpha$ gave unsound results, our results are not only
sound but are also precise.
\item Precision: it gave precise results for 59 programs. Out of these
$CBMC_\alpha$ ran out of memory for 45 programs ($CBMC_\alpha$ ran out of
memory for 14 additional programs).  On the other hand, 22 true
programs reported correctly by $CBMC_\alpha$ were verified as false by
Transformation+$CBMC_\beta$.  Transformation+$CBMC_\beta$ verified 25 program
as true which did not include 8 of programs reported correctly as true by
$CBMC_\beta$.	
\end{compactitem}

Our technique is imprecise for the other 59 of 118 programs as they do not comply with
the characterization of precision provided in Section~\ref{sec::prec}.  As can
be seen, there is a trade-off between scalability and precision.  From the view
point of reliability of results, soundness is the most desirable property of a
verifier. Our technique satisfies this requirement. Further, it not only scales up  but is also precise implying its
practical usefulness.

\subsection{Experiment 2 : Real-life Applications}
\begin{table}[t]
\centering
\caption{Real-life Application Evaluation}
\label{RealApp_table}
\begin{tabular}{ccccccccccc}
\hline
\multicolumn{4}{|c|}{Application details} & \multicolumn{3}{c|}{Sliced+CBMC} & \multicolumn{3}{c|}{\begin{tabular}[c]{@{}c@{}}Sliced\\ +Transformation \\ +CBMC\end{tabular}} & \multicolumn{1}{c|}{\multirow{2}{*}{\begin{tabular}[c]{@{}c@{}}\%\\ False\\  Positive \\ Reduction\end{tabular}}} \\ \cline{1-10}
\multicolumn{1}{|c|}{Name} & \multicolumn{1}{c|}{\begin{tabular}[c]{@{}c@{}}Size\\ (LoC)\end{tabular}} & \multicolumn{1}{c|}{ $\% loop^{full}$} & \multicolumn{1}{c|}{\#Asserts} & \multicolumn{1}{c|}{\#P} & \multicolumn{1}{c|}{\#F} & \multicolumn{1}{c|}{\#T} & \multicolumn{1}{c|}{\#P} & \multicolumn{1}{c|}{\#F} & \multicolumn{1}{c|}{\#T} & \multicolumn{1}{c|}{} \\ \hline \hline
\multicolumn{1}{|c|}{\it{navi1}} & \multicolumn{1}{c|}{1.54M} & \multicolumn{1}{c|}{100} & \multicolumn{1}{c|}{63} & \multicolumn{1}{c|}{0} & \multicolumn{1}{c|}{0} & \multicolumn{1}{c|}{63} & \multicolumn{1}{c|}{52} & \multicolumn{1}{c|}{1} & \multicolumn{1}{c|}{10} & \multicolumn{1}{c|}{82.5} \\ \hline
\multicolumn{1}{|c|}{\it{navi2}} & \multicolumn{1}{c|}{3.3M} & \multicolumn{1}{c|}{93.4} & \multicolumn{1}{c|}{103} & \multicolumn{1}{c|}{0} & \multicolumn{1}{c|}{0} & \multicolumn{1}{c|}{103} & \multicolumn{1}{c|}{95} & \multicolumn{1}{c|}{1} & \multicolumn{1}{c|}{7} & \multicolumn{1}{c|}{92.2} \\ \hline
\multicolumn{1}{|c|}{icecast\_2.3.1} & \multicolumn{1}{c|}{336K} & \multicolumn{1}{c|}{59.1} & \multicolumn{1}{c|}{114} & \multicolumn{1}{c|}{0} & \multicolumn{1}{c|}{0} & \multicolumn{1}{c|}{114} & \multicolumn{1}{c|}{53} & \multicolumn{1}{c|}{61} & \multicolumn{1}{c|}{0} & \multicolumn{1}{c|}{46.5} \\ \hline \hline
\multicolumn{11}{|c|}{\begin{tabular}[c]{@{}c@{}} $loop^{full}$ - loop \Stmt where \fullarrayaccess(\Stmt) holds, \\ P - Assertion Proved, F - Assertion Failed, T - Timeout\end{tabular}} \\ \hline 

\end{tabular}
\end{table}

We applied our technique on 3 real-life applications - 
{\it navi1} and {\it navi2} are industry codes implementing the navigation
system of an automobile and icecast\_2.3.1 is an open source project for
streaming media~\cite{icecast}.  We appended assertions using \emph{null
pointer dereference} (NPD) warnings from a sound static analysis\footnote{TCS
Embedded Code Analyzer (TCS ECA)\\
\htmladdnormallink{http://www.tcs.com/offerings/engineering\_services/Pages/TCS-Embedded-Code-Analyzer.aspx}{http://www.tcs.com/offerings/engineering\_services/Pages/TCS-Embedded-Code-Analyzer.aspx}}
tool as follows.  Lets say the dereference expression is $*a[i].p$.  A
statement $assert(a[i].p!=null)$ is added in the code just before statement
containing dereference expression.

We ran CBMC on these applications with a time out of 30 minutes. CBMC did not
scale on the original as well as the sliced programs. We ran our transformation engine on
sliced programs. Table~\ref{RealApp_table} shows the consolidated results of
our experiments. Out of 280 assertions, sliced+transformation+CBMC proved 200
assertions taking 12 minutes on average for transformation+verification. This
is a much less in comparison to the time given to CBMC for sliced programs
(sliced+CBMC), which was 30 minutes.  

To verify the correctness of our implementation, we analyzed the warnings
manually. We found that all 280 warnings were false, implying that all the
assertions should have been proved successfully.
\begin{compactitem}
\item CBMC could scale up for such large applications because there are no loops in transformed programs.
However, CBMC could not scale for 17 cases even after transformation because of the presence of a long recursive call chain of calls through function
pointers.
\item CBMC could not prove 63 of
the assertions since array definitions reaching at the assertion were from the
loops where $\fullarrayaccess(\Stmt)$ did not hold. 
Hence the witness variable takes over-approximated values.
\item CBMC proved 200 assertions, where
all the conditions for precision mentioned in Section~\ref{sec::prec} get fulfilled. In these experiments, we checked for the NPD property which is
value-independent. Moreover, we found that the assertions inserted by us are
not control-dependent on any scalar.
\end{compactitem}

Note that the number of false warnings
eliminated in an application is proportional to the number of loops for which
\fullarrayaccess(\Stmt) hold.  
Over a diverse set of applications, we found that our technique
could eliminate 40-90\% of false warnings. This is a significant value addition
to static analysis tools that try to find defects and end up generating a large
number of warnings. In fact, our own effort grew out of the need of handling
warnings that were generated by our proprietary static analysis tool, a large
fraction of which were false positives.

\section{Related Work}
\label{relwork}

The literature on automated reasoning about array-manipulating code can be broadly categorized into \emph{analysis} and \emph{verification}.
Most methods that analyze programs manipulating arrays~\cite{blanchet2002design,gopan2005framework,halbwachs2008discovering,cousot2011parametric,liu2014abstraction}
are based on abstract interpretation.
Cornish et al.~\cite{cornish2014analyzing} transform a program to remove arrays and discover 
non-trivial universally quantified loop invariants by analyzing
the transformed program using off-the-shelf abstract scalar analysis. 
Since they create additional blocks for each value
of \emph{summary variable}, the program size increases considerably raising concerns about scalability. 
Similar to our approach, Monniaux et al.~\cite{monniaux2015simple} transform array programs by replacing array operations with a scalar. However they keep loops. These programs are then analyzed using methods producing invariants (\emph{back-ends}).
CBMC did not scale up 
on the transformed ''array copy''
example (10000 loop bound) given in the paper, suggesting that scalability is a concern with this technique too.
However, using our technique CBMC scaled for the same program.
 
Dillig et al.~\cite{dillig2010fluid} introduced fluid updates of arrays in order
to do away with strong and weak updates. Their technique uses indexed locations
along with bracketing constraints, a pair of over- and under-approximative
constraints, to specify the concrete elements being updated. 
In another
work~\cite{dillig2011precise}, they propose an automatic technique
to reason about contents of arrays (or containers, in general). However, they
introduce an abstraction to encode all values that (a subset of) elements may
have. In contrast, since our technique choses only one representative element
to work with, we can capture its value precisely.

Template-base methods~\cite{beyer2007invariant,Gulwani2008Lifting} have been
very useful in synthesizing invariants but these techniques are ultimately
limited by a large space of possible templates that must be searched to get a
good candidate template. This has also led to semi-automatic approaches, such
as~\cite{flanagan2002predicate}, where the predicates are usually suggested by
the user. Our approach, however, is fully automatic and proves safety by
solving a bounded model checking instance instead of computing an invariant
explicitly.

Verification tools based on CEGAR have been applied successfully to
certain classes of programs, e.g., device drivers~\cite{ball2002s}. However,
this technique is orthogonal to ours. In fact, a refinement framework in
addition to our abstraction would make our technique complete. Several other
techniques have been used to scale BMCs to tackle complex, real-world
programs such as acceleration~\cite{kroening2013under} and
loop-abstraction~\cite{darke2015over}.  But these techniques are not shown to
be beneficial in abstracting complex data structures.
Booster~\cite{alberti2014booster}, a recent tool for verifying C-like programs
handling arrays, integrates acceleration and lazy abstraction with interpolants
for arrays~\cite{alberti2014decision,alberti2012lazy}.  It exploits
acceleration techniques to compute an exact set of reachable states, whenever
possible, for programs with arrays.  For instance, their technique works on
\emph{$simple_{\mathcal{A}}^0$} programs~\cite{alberti2014decision}. However,
there are syntactic restrictions that limit the applicability of acceleration
in general for programs handling arrays. Note that Booster uses acceleration,
instead of abstraction-based procedures, for want of a precise solution (not
involving over-approximations). Since our technique is also precise for a
characterizable class of programs, it is certainly possible to gainfully
combine the two techniques in order to handle a larger class of programs than
what either of them can handle in isolation.

\section{Conclusions and Future Work}
\label{conclusion}

Verification of programs with loops iterating over arrays is a challenging problem because of large sizes of arrays.
We have explored a middle ground between the two
extremes of relying completely on dynamic approaches of using model checkers on the one hand and using completely
static analysis involving complex domains and fix point computations on the other hand. Our experience shows that using static analysis
to transform the program and letting the model checkers do the rest is a sweet spot that enables verification of
properties of arrays using an automatic technique that is generic, sound, scalable, and reasonably precise.

Our experiments show that the effectiveness of our technique depends on the characteristics of programs and
properties sought to be verified.
We are able to eliminate 40-90\% of false warnings from
diverse applications.
This is a significant value addition to static analysis that try to find
defects and end up generating a large number of warnings which need to be resolved manually for safety critical applications. 
Our effort grew out of our own experience of such manual reviews which showed
a large number of warnings to be false positives.

We plan to make our technique more precise by augmenting it with a refinement step to verify the programs that are 
reported as unsafe by our current technique. 
Finally, we wish to extend our technique on other data structures such as maps or lists.

\Urlmuskip= 0mu plus 2mu\relax
\bibliographystyle{abbrv}
\bibliography{refs}

\end{document}